\newtheorem{theorem}{Theorem}[section]
\newtheorem{remark}{Remark}[section]
\newtheorem{fact}{Fact}[section]
\newcommand{\E}{\mathrm{E}}
\newcommand{\sym}{\mathrm{sym}}
\newcommand{\eqFunc}{\overset{\mathrm{f}}{=}}
\newcommand{\aaaa}{\mathrm{(a)}}
\newcommand{\bbbb}{\mathrm{(b)}}
\newcommand{\lp}{\left(}
\newcommand{\rp}{\right)}
\newcommand{\lb}{\left[}
\newcommand{\rb}{\right]}
\newcommand{\lbp}{\left\{}
\newcommand{\rbp}{\right\}}
\newcommand{\ul}{\underline}
\newcommand{\ol}{\overline}
\newcommand{\mcal}{\mathcal}
\newcommand{\wtild}{\widetilde}
\newcommand{\mb}{\mathbf}
\newcommand{\mbb}{\mathbb}
\newcommand{\msf}{\mathsf}
\newcommand{\ra}{\rightarrow}
\newcounter{MYtempeqncnt}
\title{Interference Channel with Intermittent Feedback}
\author{
\authorblockN{Can Karakus}
\authorblockA{
UCLA, Los Angeles, USA\\
\textsf{karakus@ucla.edu}}
\and
\authorblockN{I-Hsiang Wang}
\authorblockA{
EPFL, Lausanne, Switzerland\\
\textsf{i-hsiang.wang@epfl.ch}}
\and
\authorblockN{Suhas Diggavi}
\authorblockA{
UCLA, Los Angeles, USA\\
\textsf{suhasdiggavi@ucla.edu}}
}
\begin{document}
\maketitle
\begin{abstract}
  We investigate how to exploit intermittent feedback for interference
  management. Focusing on the two-user linear deterministic
  interference channel, we completely characterize the capacity
  region. We find that the characterization only depends on the
  forward channel parameters and the marginal probability distribution
  of each feedback link. The scheme we propose makes use of block
  Markov encoding and quantize-map-and-forward at the transmitters,
  and backward decoding at the receivers. Matching outer bounds are
  derived based on novel genie-aided techniques. As a consequence, the
  perfect-feedback capacity can be achieved once the two feedback
  links are active with large enough probabilities.
\end{abstract}

\section{Introduction}
The simplest information theoretic model for studying interference is
the two-user Gaussian \emph{interference channel} (IC). It is shown
that feedback can provide an unbounded gain in capacity for two-user
Gaussian interference channels \cite{SuhTse_11}, in contrast to the
bounded power gain provided by feedback in point-to-point, multiple
access, and broadcast channels. This has been demonstrated when the
feedback is unlimited, perfect, and free of cost in \cite{SuhTse_11}.
This motivates the natural question of whether feedback can provide
similar gains under more practical feedback models.

In this work, we investigate how to exploit \emph{intermittent}
feedback for managing interference.  Such intermittent feedback could
occur in several situations. For example, one could use a side-channel
such as  WiFi for feedback; in this case since the WiFi channel is
best effort, dropped packets might cause intermittent feedback.  In
other situations, control mechanisms in higher network layers could
cause the feedback resource to be available intermittently. We study
the effect of intermittent feedback using the linear deterministic
model \cite{AvestimehrDiggavi_09} of the two-user Gaussian IC. For the
feedback links, Bernoulli processes $\{S_1[t]\}$ and $\{S_2[t]\}$
control the presence of feedback for user $1$ and $2$,
respectively. Although the joint distribution $p (S_1 [t], S_2 [t])$ can be
time-variant in general, for simplicity, we focus on the case where it is i.i.d. over time.
Our results suggest that extension to the time-variant case is straightforward.
We assume that the receivers are
\emph{passive}: they simply feedback their received signals back to
the transmitters without coding. In other words, each transmitter
receives from feedback a punctured version of the received sequence at
its own receiver with unit delay. We focus on the passive feedback model
as the intermittence of feedback is motivated by
the availability of feedback resources (either through use of best-effort
WiFi for feedback or through feedback resource scheduling). Therefore, 
it might be that the time-variant statistics of the intermittent feedback are not \emph{a priori}
available at the receiver and therefore precluding active coding. Moreover, the
availability of the feedback resource may not be known ahead of 
transmission, therefore motivating the causal state-information at the transmitter.

In the literature, other practical feedback models are also
investigated.  Rate-limited feedback for the two-user IC was
considered in \cite{VahidSuh_12}, where the feedback from the
receivers to the transmitters is modeled by two finite-capacity
noiseless links. \cite{VahidSuh_12} characterized the capacity region
for the linear deterministic IC and the sum capacity to within a
constant gap for the symmetric Gaussian IC. If indeed the feedback 
statistics is \emph{a priori} known at the receiver, one can use active feedback to
code for these erasures thereby creating noiseless finite capacity feedback 
links. In contrast to our model,
the receivers in \cite{VahidSuh_12} can actively code the feedback
signals. On the other hand, \cite{LeTandon_12} considered feedback
with additive white Gaussian noise but the receivers are passive, that
is, they cannot encode the feedback signal. The capacity region is
characterized for the symmetric linear deterministic IC
\cite{LeTandon_12}. When feedback shares the same resource (spectrum)
with the forward channel and hence is not free of cost, it is shown in
\cite{SuhWang_12} through the study on two-way interference channels
that feedback can provide net capacity gain even taking the feedback
cost into account.

Our main contribution is the characterization of the capacity region
of the linear deterministic IC with intermittent feedback, which only
depends on the forward channel parameters and the marginal
distribution of $S_1$ and $S_2$, not on the joint distribution.
Interestingly, the full benefit obtained via perfect feedback can be
achieved once the ``on" probabilities of the two feedback links are
large enough.
Our result in the linear deterministic model also suggests that in the
Gaussian case, the capacity gain from intermittent feedback remains
unbounded.

We propose a block Markov encoding scheme to exploit intermittent
feedback at the transmitters along with backward decoding at the
receivers. For linear deterministic IC, the high-level idea is to
exploit the additional information provided by intermittent
feedback to refine the interfered signals or to relay additional
information.  Due to the passive nature of the receivers, not all the
information contained in the feedback is useful, in sharp contrast to
the case of rate-limited feedback \cite{VahidSuh_12}.  Therefore, at
the transmitters instead of the (partial) decode-and-forward scheme
employed in \cite{SuhTse_11} and \cite{VahidSuh_12}, we use
\emph{quantize-map-and-forward} \cite{AvestimehrDiggavi_09} 
to extract useful information from the intermittent feedback and 
send it to the receivers.

We also develop novel outer bounds that match the achievable rate
region. Remarkably, in our proof we do not make use of the assumption
that the states are known to the transmitters causally, which proves
that even when the realization of $(S_1^N, S_2^N)$ is known
non-causally, the capacity region remains the same.


The rest of this paper is organized as follows. We formulate the
problem in Section~\ref{sec_Formulation} and present our main result
in Section~\ref{sec_Result}. Achievability and converse are proved in
Section~\ref{sec_Achieve} and Section~\ref{sec_Converse} respectively.
An extended version of this paper that includes the proof details can be found in \cite{KarakusWang_13}.

\section{Problem Formulation}\label{sec_Formulation}
\begin{figure}[htbp]
{\center
\includegraphics[width=2in]{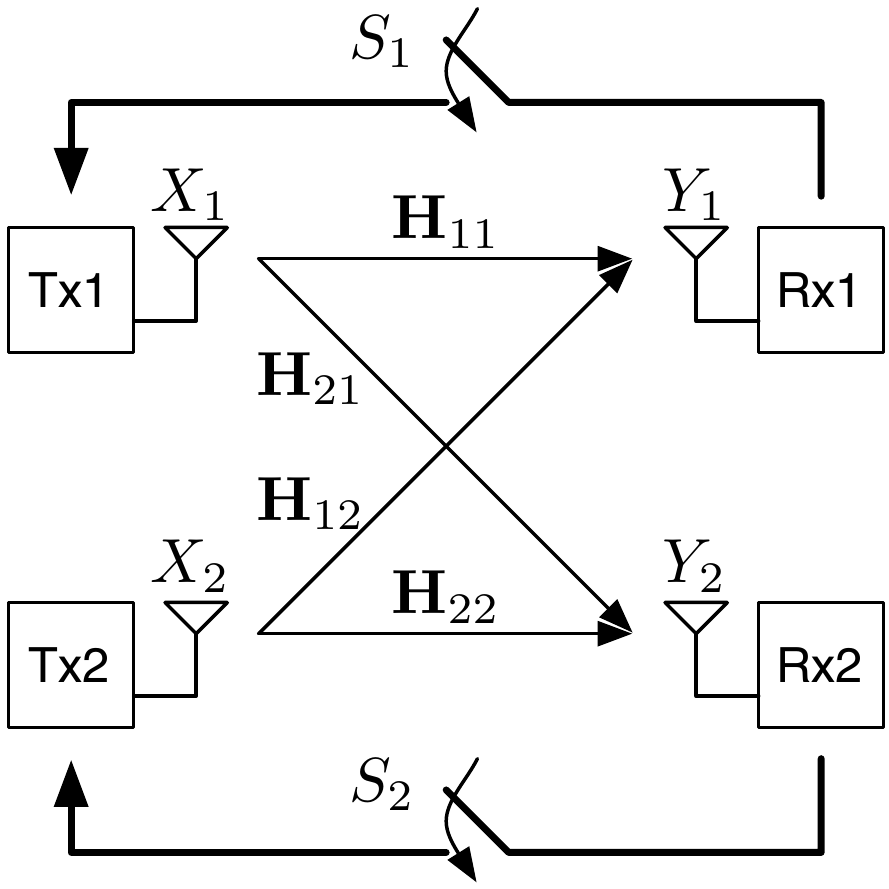}
\caption{Linear Deterministic IC with Intermittent Feedback}
\label{fig_Channel}
}
\end{figure}

In this paper, we focus on the linear deterministic model \cite{AvestimehrDiggavi_09} of the two-user Gaussian IC. An illustration is given in Fig.~\ref{fig_Channel}.

The transmitted signal at transmitter $i$ (Tx$i$) is $X_i \in \mbb{F}_2^{q}$, for $i=1,2$. Here $\mbb{F}_2$ denotes the binary field $\{0,1\}$. The received signals at receiver $1$ (Rx1) and receiver $2$ (Rx2) are
\begin{align*}
Y_1[t] &= \mb{H}_{11} X_1[t] + \mb{H}_{12} X_2[t], \\ 
Y_2[t] &= \mb{H}_{22} X_2[t] + \mb{H}_{21} X_1[t],
\end{align*}
where additions are modulo-two component-wise. Channel transfer matrices $\mb{H}_{ij} := \mb{S}^{q-n_{ij}}$ for $(i,j) \in \{1,2\}^2$, where $q = \max\lbp n_{11},n_{12},n_{21},n_{22}\rbp$, and $\mb{S}\in\mathbb{F}_2^{q\times q}$ is the shift matrix
$
\begin{bmatrix}
\mb{0}^T & 0\\ \mb{I}_{q-1} & \mb{0}
\end{bmatrix}
$, where $\mb{0}$ is the zero vector in $\mbb{F}_2^{q-1}$ and $\mb{I}_{q-1}$ is the identity matrix in $\mbb{F}_2^{(q-1)\times(q-1)}$.
The transmit signal from Tx$i$ at time $t$, $X_i[t]$, is determined by the message $W_i$, $\wtild{Y}_i[1:t-1]$, and $(S_1[1:t-1],S_2[1:t-1])$, where $\wtild{Y}_i[t] := S_i[t]Y_i[t]$. 
Let us use the notation $A\eqFunc B$ to denote that $A$ is a function of $B$. Then,
$X_i[t] \eqFunc \lp W_i, \wtild{Y}_i[1:t-1], S_1[1:t-1],S_2[1:t-1]\rp$.

The feedback state sequences have the joint distribution
$$ p\lp S_1^N, S_2^N\rp = \prod_{t=1}^Np\lp S_1[t],S_2[t]\rp.$$
Let $q_{i_1i_2} := p\lp S_1=i_1, S_2=i_2\rp$, $i_1,i_2\in\{0,1\}$. 
Marginally, $S_i[t]\sim\mathrm{Ber}(p_i)$, i.i.d. over time. In other words, the feedback signal is erased with probability $1-p_i$, where 
$$p_1 := q_{10}+q_{11},\ p_2 := q_{01}+q_{11}.$$

For notational convenience, denote $\ul{S} := (S_1,S_2)$ and 
\begin{align*}
&V_1:=\mb{H}_{21} X_1,&
&V_2:=\mb{H}_{12} X_2,&
&\wtild{V}_1 := S_2V_1, & &\wtild{V}_2 := S_1V_2.
\end{align*}


\section{Main Result}\label{sec_Result}
\begin{figure*}[!t]
\normalsize
\setcounter{MYtempeqncnt}{\value{equation}}
\setcounter{equation}{0}
\begin{align}
R_1 &\le \min\lbp \max(n_{11},n_{12}), n_{11}+p_2(n_{21}-n_{11})^+\rbp \label{eq_R1Bd}\\
R_2 &\le \min\lbp \max(n_{22},n_{21}), n_{22}+p_1(n_{12}-n_{22})^+\rbp \label{eq_R2Bd}\\
R_1+R_2 &\le \min\Big\{ \max(n_{11},n_{12}) + (n_{22}-n_{12})^+ , \max(n_{22},n_{21}) + (n_{11}-n_{21})^+ \Big\}\label{eq_R1R2Bd1}\\
R_1+R_2 &\le \max\lbp n_{12}, (n_{11}-n_{21})^+\rbp + \max\lbp n_{21}, (n_{22}-n_{12})^+\rbp \notag\\
&\quad + p_1\min\lbp n_{12}, (n_{11}-n_{21})^+\rbp + p_2\min\lbp n_{21}, (n_{22}-n_{12})^+\rbp \label{eq_R1R2Bd3}\\
2R_1+R_2 &\le \max(n_{11},n_{12}) + \max\lbp n_{21}, (n_{22}-n_{12})^+\rbp + (n_{11}-n_{21})^+ + p_2\min\lbp n_{21}, (n_{22}-n_{12})^+\rbp \label{eq_2R1R2Bd}\\
R_1+2R_2 &\le \max(n_{22},n_{21}) + \max\lbp n_{12}, (n_{11}-n_{21})^+\rbp + (n_{22}-n_{12})^+ + p_1\min\lbp n_{12}, (n_{11}-n_{21})^+\rbp \label{eq_R12R2Bd}
\end{align}
\addtocounter{MYtempeqncnt}{6}
\setcounter{equation}{\value{MYtempeqncnt}}
\hrulefill
\end{figure*}

The main result is summarized in the following theorem.
\begin{theorem}[Capacity Region]\label{thm_Capacity}
The capacity region $\mcal{C}$ for the linear deterministic IC with intermittent feedback is the collection of non-negative $(R_1,R_2)$ satisfying \eqref{eq_R1Bd} -- \eqref{eq_R12R2Bd}.
\end{theorem}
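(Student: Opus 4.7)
The plan is to prove Theorem~\ref{thm_Capacity} by matching achievability and converse. For achievability, I would use block-Markov superposition encoding with quantize-map-and-forward (QMF) at the transmitters and backward decoding at the receivers. For the converse, I would derive each of the six inequalities using a genie-aided argument tailored so that only the marginal probabilities $p_1,p_2$ appear, and crucially so that the joint distribution $q_{i_1 i_2}$ never enters the bound.

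On the achievability side, I would first perform a Han--Kobayashi-style split of each message into a common part (decoded at both receivers) and a private part (decoded only at the intended receiver), and transmit over $B$ blocks. In block $b$, Tx$i$ emits the fresh layer of its codeword plus, on appropriate signal levels, a hash (bin index) of its intermittent feedback observation $\wtild{Y}_i$ accumulated in block $b-1$. Because the receivers are passive, the transmitters cannot recover the interferer's codeword and hence cannot employ (partial) decode-and-forward as in~\cite{SuhTse_11,VahidSuh_12}; QMF is the natural substitute, compressing $\wtild{Y}_i$ into a bin index whose description rate is roughly $p_i$ times the nominal feedback rate and can therefore be piggy-backed on the cross-link levels. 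Backward decoding is then carried out from block $B$ down to block $1$; at block $b$, the already-decoded block-$(b+1)$ codewords supply the bin index of the previous feedback observation, which acts as side information to decode the common and private messages of block $b$. A standard error-event analysis on each corner point of the polytope, with the rate of each layer tracked as a function of $p_1,p_2$, closes the argument.

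On the converse side, the cut-set-type term $\max(n_{11},n_{12})$ and the perfect-feedback bound \eqref{eq_R1R2Bd1} follow from standard Fano-plus-chain-rule manipulations and require no knowledge of $p_1,p_2$. The bounds that scale linearly in $p_i$ are obtained by giving a genie the intermittent feedback signal received at the \emph{opposite} transmitter. For instance, to prove \eqref{eq_R1Bd} I would provide Rx1 with $(\wtild{Y}_2^N,S_2^N)$ as side information; after cancelling Rx1's own contribution, the surviving entropy $H(\wtild{V}_1^N\mid\cdot)$ evaluates via the Bernoulli marginal to $p_2$ times the entropy of the unmasked signal, producing exactly the $p_2(n_{21}-n_{11})^+$ term. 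Analogous genies, coupled with a decomposition of the relevant signals into an ``independent'' level set (living outside a $\min\{\cdot\}$) and a ``feedback-enhanced'' level set (living inside the $\min\{\cdot\}$), yield \eqref{eq_R1R2Bd3}--\eqref{eq_R12R2Bd}; the weighted-sum-rate inequalities \eqref{eq_2R1R2Bd}--\eqref{eq_R12R2Bd} require an additional step in which the genie is handed to one receiver twice under two different labellings of the message indices.

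The main obstacle will be ensuring that none of the outer bounds picks up a dependence on the joint distribution $q_{i_1 i_2}$. Because the transmit signals at time $t$ depend causally on the full past $(S_1^{t-1},S_2^{t-1})$, joint-state effects could in principle infiltrate the conditional entropy terms. Circumventing this requires designing each genie to reveal only one of the two state sequences, so that after taking the expectation over the states only one of $p_1$ or $p_2$ appears as a prefactor. A useful by-product of this design choice is that the argument never exploits causal availability of $(S_1^N,S_2^N)$ at the transmitters, so the same outer bound automatically holds under non-causal state knowledge, establishing the remark flagged in the introduction about the insensitivity of $\mcal{C}$ to the state-information model.
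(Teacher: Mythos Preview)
Your high-level plan---QMF block-Markov encoding with backward decoding for achievability, genie-aided Fano manipulations for the converse---matches the paper, but two concrete gaps would keep the sketch from closing.

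\textbf{Achievability.} You describe Tx$i$ as hashing its own feedback observation $\wtild{Y}_i$ and piggy-backing the bin on its cross-link levels. This misses the cooperative structure that drives the paper's scheme: each transmitter can reconstruct \emph{both} $\wtild{V}_1^N(b-1)=S_2^N V_1^N(b-1)$ and $\wtild{V}_2^N(b-1)=\wtild{Y}_1^N(b-1)-S_1^N X_1^N(b-1)$ (Tx1 knows its own $X_1$, hence $V_1$, and both state sequences; Tx2 symmetrically). Both transmitters then quantize the \emph{same} pair $(\ol{V}_1,\ol{V}_2)$ with shared codebooks, and the resulting indices $(q_1,q_2)$ determine a common cloud center $X_{ie}$ on which the Han--Kobayashi common and private layers are superposed at \emph{both} encoders. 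Without this shared base layer the Fourier--Motzkin region falls short of the target. A second ingredient you omit is the subtraction $\ol{V}_i^N(b-1):=\wtild{V}_i^N(b-1)-\wtild{V}_{ie}^N(b-1)$, which strips out the block-$(b-2)$ content and is what permits a single-letter rate evaluation; with raw $\wtild{Y}_i$ the dependence telescopes back to block~$1$.

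\textbf{Converse.} The mechanism you propose for keeping only marginal $p_i$'s---``design each genie to reveal only one of the two state sequences''---is not how the marginal-only dependence arises, and would not survive the fact that $X_i[t]$ is a function of both past state sequences. The paper conditions on the full state $\ul{S}^N=(S_1^N,S_2^N)$ throughout. For the second term of \eqref{eq_R1Bd} the genie is $(V_2^N,\wtild{V}_1^N)$, not $\wtild{Y}_2^N$: the leakage $I(W_1;V_2^N,\wtild{V}_1^N\mid\ul{S}^N)$ collapses to at most $Np_2 n_{21}$ because $V_2^N\eqFunc(W_2,\wtild{V}_1^N,\ul{S}^N)$, so the only fresh randomness about $W_1$ is $\wtild{V}_1^N=S_2^N V_1^N$, whose entropy scales with $p_2$; and the residual $H(Y_1^N\mid V_2^N,\wtild{V}_1^N,\ul{S}^N)$ is controlled by the same $S_2$-masking. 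The sum-rate and weighted-sum bounds combine this with the identity $H(Y_i^N\mid W_i,\ul{S}^N)=H(V_j^N,\wtild{V}_i^N\mid W_i,\ul{S}^N)$ to create matching positive and negative entropy terms. Your $\wtild{Y}_2^N$-genie does not produce these cancellations directly.
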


Note that the rate region defined by \eqref{eq_R1Bd} -- \eqref{eq_R1R2Bd1} is the perfect-feedback IC capacity region \cite{SuhTse_11}. Therefore, once $p_1$ and $p_2$ are so large that  \eqref{eq_R1R2Bd3} -- \eqref{eq_R12R2Bd} become inactive, the perfect-feedback performance can be attained even under intermittent feedback. The thresholds on $p_1,p_2$ will depend on $\{n_{ij},i,j\in\{1,2\}\}$.

As an example, let us focus on the symmetric capacity $C_{\sym}$ under the symmetric setting $n_{11}=n_{22}=n, n_{12}=n_{21}=\alpha n, p_1=p_2=p$: ($C_{\sym} := \max_{(R,R)\in\mcal{C}} R$)
\begin{align*}
\frac{C_{\sym}}{n} &= \lbp\begin{array}{ll}
\min\lbp 1-\alpha/2, 1-(1-p)\alpha\rbp, &\alpha \le 1/2\\
\min\lbp 1-\alpha/2, p+(1-p)\alpha\rbp, &1/2 \le \alpha \le 1\\
\min\lbp \alpha/2, (1-p) + p\alpha\rbp, & \alpha \ge 1
\end{array}\right. ,
\end{align*}
where the first term in each minimization is the perfect-feedback capacity. Hence, we find the threshold on $p$ above which perfect-feedback capacity can be achieved, as follows:
\begin{align*}
p^* &= \lbp\begin{array}{ll}
1/2, &\alpha \le 1/2\\
(2-3\alpha)^+/(2-2\alpha), &1/2 \le \alpha \le 1\\
(\alpha-2)^+/(2\alpha-2), & \alpha \ge 1
\end{array}\right. .
\end{align*}
Note that in the regime $2/3\le \alpha\le 2$, feedback does not increase the symmetric capacity of IC \cite{SuhTse_11} and hence $p^*=0$, that is, we do not need feedback at all. Also note that $p^* \le 1/2$ for all $\alpha$. Therefore, once $p \ge 1/2$, perfect-feedback capacity can be achieved with intermittent feedback regardless of channel parameters $\alpha$ and $n$. Note that the larger $p$ is, the larger the amount of additional information about the past reception can be obtained through intermittent feedback at the transmitters. If the amount of such information is larger than a threshold, then sending it to the receivers will limit the rate for delivering fresh information. The threshold $p^*$ represents the maximum limit at which the help of feedback is not neutralized by this effect.

%
%
\section{Achievability Proof}\label{sec_Achieve}
To prove the achievability part of Theorem~\ref{thm_Capacity}, in this
section we provide a coding scheme to exploit intermittent feedback in
the interference channel. Since the feedback is passive, one cannot
code against the erasures in the feedback links. This is the key
difference with the active feedback case \cite{VahidSuh_12}. In the
active feedback case, a block Markov coding scheme based on
decode-and-forward at the transmitters is employed, which is a natural
extension of that in the perfect feedback
case \cite{SuhTse_11}. Instead, we employ a block Markov coding scheme
based on \emph{quantize-map-and-forward}, which can be viewed as a
non-trivial extension of the perfect feedback scheme. Below we
describe the scheme in detail.

\subsection{High-level Description}
For the two-user linear deterministic IC, Han-Kobayashi coding scheme is a natural choice where we split the message into common and private: $W_i := (W_{ic}, W_{ip})$ for $i=1,2$. The total number of blocks to be transmitted is $B$, and the length of each block is $N$. For block $b\in[1:B]$, the messages $\{W_1(b), W_2(b)\}$ are independent from the messages of other blocks. 
In the following we describe the encoding and decoding for a particular block $b$ for user $1$. Operations of user $2$ are similar. 

In the beginning of block $b$, from the punctured feedback $\wtild{Y}_1^N(b-1)$ and the state sequence $\ul{S}^N(b-1)$, Tx1 generates
\begin{align*}
&\wtild{V}_1^N(b-1) = S_2^N(b-1)V_1^N(b-1),\\
&\wtild{V}_2^N(b-1) = \wtild{Y}_1^N(b-1) - S_1^N(b-1)X_1^N(b-1).
\end{align*}

Note that Tx2 can also generate $\lp \wtild{V}_1^N(b-1), \wtild{V}_2^N(b-1) \rp$. The high-level idea is to use this common information to \emph{cooperatively} refine the previously received signals at the receivers and/or relay additional information, as in the perfect feedback case \cite{SuhTse_11}. The only difference is that here we do not decode at the transmitters. The reason is that the imperfect feedback may lay additional constraints on the achievable rate if we insist to decode. Decoding partially is not optimal since the realization of erasures in the feedback links is not known beforehand, and if we insist to decode at the transmitters some pre-assigned sub-messages, this could harm the achievable rate of other sub-messages.

Hence, we shall quantize $\wtild{V}_1^N(b-1)$ and $\wtild{V}_2^N(b-1)$, and map the quantized outputs to a codeword $X_{1e}^N(b)$ which contains the helping information for interference refinement or message relaying regarding block $b-1$. On top of it, we further superpose fresh information of block $b$, the messages $W_{1c}(b)$ and $W_{1p}(b)$, to generate the transmit codeword $X_1^N(b)$. 

For decoding, we employ backward decoding. At the end of block $b$, assuming that quantized version of $\wtild{V}_1^N(b)$ and $\wtild{V}_2^N(b)$ has been successfully decoded from the future block $b+1$, Rx1 decodes $W_{1c}(b)$, $W_{2c}(b)$, and $W_{1p}(b)$ jointly with the quantized version of $\wtild{V}_1^N(b-1)$ and $\wtild{V}_2^N(b-1)$.

In the above scheme, we can see that $\wtild{V}_1^N(b-1)$ and $\wtild{V}_2^N(b-1)$ will contain some information of block $b-2$, which contains information about block $b-3$, and so on. This dependency across blocks hampers a single-letter characterization of the achievable rates, since the mutual information terms obtained will depend on all the signals sent in previous blocks. In order to remove the dependency across more than two blocks, we carry out the following operation before quantizing $\wtild{V}_1^N(b-1)$ and $\wtild{V}_2^N(b-1)$: generate
\begin{align*}
\ol{V}_i^N(b-1) := \wtild{V}_i^N(b-1) - \wtild{V}_{ie}^N(b-1)
\end{align*}
where $\wtild{V}_{ie}^N(b-1) := S_j^N(b-1)\mb{H}_{ji}X_{ie}^N(b-1)$ for $(i,j) = (1,2), (2,1)$. The operation is feasible since both $X_{1e}^N(b-1)$ and $X_{2e}^N(b-1)$ are generated from shared information $\ol{V}_1^N(b-2)$ and $\ol{V}_2^N(b-2)$, and hence (by induction) available at both transmitters. In our proposed encoding architecture, the quantization is performed on $\ol{V}_1^N(b-1)$ and $\ol{V}_2^N(b-1)$. An illustration of the encoding architecture is given in Fig.~\ref{fig_ENC}.

\begin{figure}[htbp]
{\center
\includegraphics[width=3.4in]{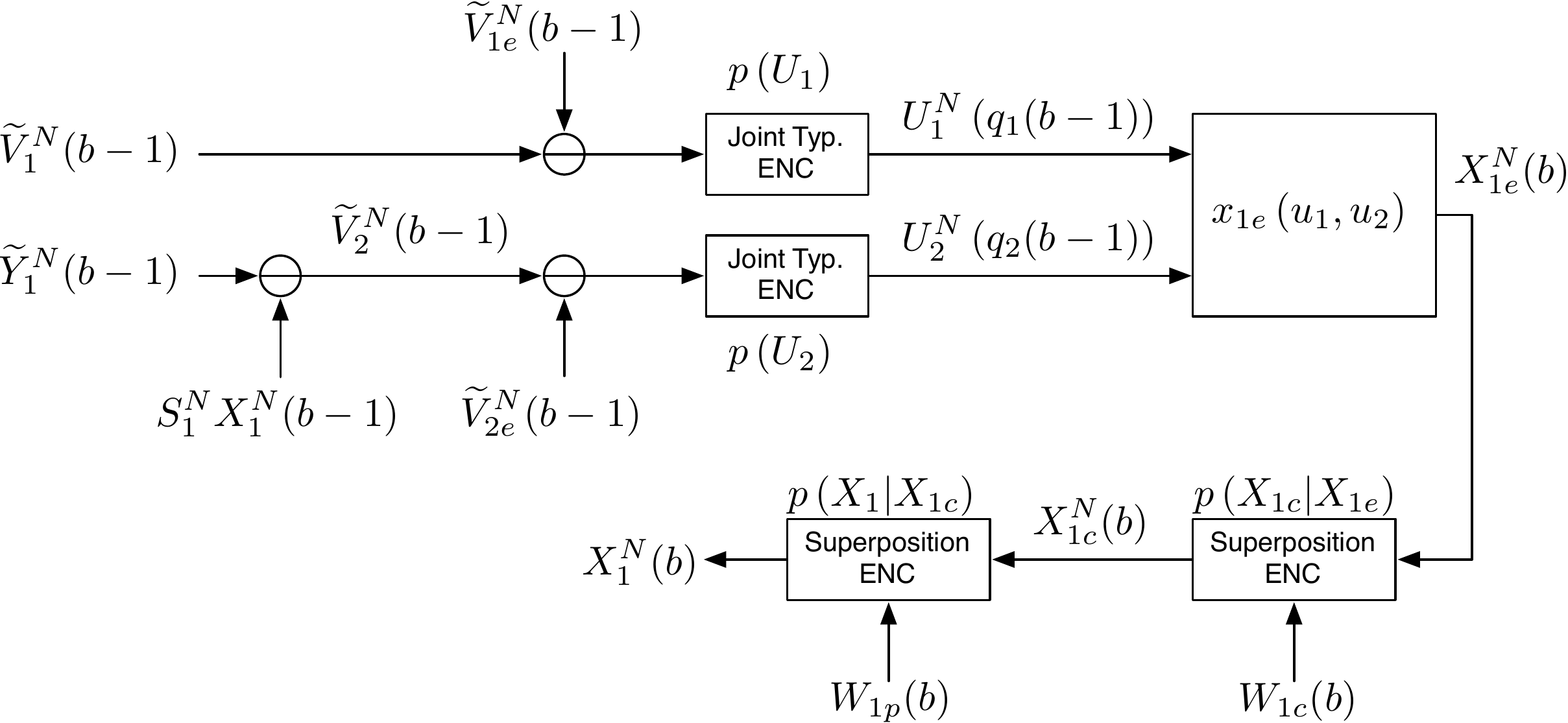}
\caption{Block Diagram of Encoder at Tx1}
\label{fig_ENC}
}
\end{figure}

\subsection{Codebook Generation and Detailed Coding Process}
We describe the scheme for block $b$ in detail below.

{\flushleft\bf Codebook Generation and Encoding}:\par
Based on $p\lp U_i\rp$, generate $2^{Nr_i}$ quantization codewords $U_i^N$ i.i.d. over time, for $i=1,2$, to quantize $\ol{V}_i^N(b-1)$. Let $q_1(b-1)$ and $q_2(b-1)$ denote the quantization indices. The quantization is carried out by a joint typicality encoder as in standard source coding. We then choose a symbol-by-symbol map $x_{1e}(u_1,u_2)$ for mapping $\lp U_1^N(q_1(b-1)), U_2^N(q_2(b-1))\rp$ into $X_{1e}^N(b)$. This completes the joint-source-channel coding part of the encoder. Note that we shall use the same pair of quantization codebooks at both transmitters.

Superposition encoding is done in a standard way. We base on $p\lp X_{1c} | X_{1e}\rp$ to generate $2^{NR_{1c}}$ common codewords $X_{1c}^N( W_{1c}(b), X_{1e}^N(b))$ i.i.d. over time. Then based on $p\lp X_{1} | X_{1c}\rp$, we generate $2^{NR_{1p}}$ transmit codewords $X_{1}^N( W_{1p}(b), X_{1c}^N(b))$ i.i.d. over time.

{\flushleft\bf Decoding}:\par
At the end of block $b$, we assume that quantization indices $\{q_1(b),q_2(b)\}$ have been decoded from block $b+1$. The additional information carried by $\{U_1^N(q_1(b)), U_2^N(q_2(b))\}$ can be used in conjunction with $Y_1^N(b)$ to decode $\lp W_{1c}(b), W_{1p}(b), q_1(b-1), q_2(b-1)\rp = (i_1,j,q_1,q_2)$. We find such a unique $(i_1,j,q_1,q_2)$ and some $W_{2c}(b)=i_2$ such that
\begin{align}\label{eq_Seq}
\lp \begin{array}{l}
Y_1^N(b), U_1^N\lp {q}_1(b)\rp, U_2^N\lp {q}_2(b)\rp,U_1^N(q_1), U_2^N\lp q_2\rp,\\ X_{1c}^N(i_1, q_1,q_2), X_{1}^N(j, i_1,q_1,q_2), X_{2c}^N(i_2, q_1,q_2)
\end{array}\rp
\end{align}
is jointly $\epsilon$-typical.

\subsection{Analysis}
The key to a single-letter rate characterization of the above scheme is that, the actual quantization codewords are independent across different blocks, that is, $\lp U_1^N\lp {q}_1(b)\rp, U_2^N\lp {q}_2(b)\rp \rp$ and $\lp U_1^N\lp {q}_1(b-1)\rp, U_2^N\lp {q}_2(b-1)\rp\rp$ are independent. This is due to the removal of $\wtild{V}_{1e}^N(b-1)$ and $\wtild{V}_{2e}^N(b-1)$ described above. The error probability analysis is standard so we omit the details here. Below we sketch the analysis of the error event where $j$ is decoded incorrectly but $(i_1,i_2,q_1,q_2)$ is correct. Note that the joint distribution of the random vectors in \eqref{eq_Seq} is
\begin{align*}
&p\lp y_1^N, u_1^N, u_2^N | u_1'^N, u_2'^N, x_{1}^N, x_{2c}^N\rp p\lp u_1'^N, u_2'^N\rp\\
&\cdot p\lp x_1^N | x_{1c}^N \rp p\lp x_{1c}^N| u_1'^N, u_2'^N\rp p\lp x_{2c}^N| u_1'^N, u_2'^N\rp
\end{align*}
with a change of notations for the sake of simplicity:
\begin{align*}
&Y_1^N(b)\ra y_1^N,\ U_1^N\lp {q}_1(b)\rp\ra u_1^N,\ U_2^N\lp {q}_2(b)\rp\ra u_2^N,\\
&U_1^N(q_1)\ra u_1'^N,\ U_2^N\lp q_2\rp\ra u_2'^N,\ X_{1c}^N(i_1, q_1,q_2)\ra x_{1c}^N,\\
&X_{1}^N(j, i_1,q_1,q_2)\ra x_1^N,\ X_{2c}^N(i_2, q_1,q_2)\ra x_{2c}^N.
\end{align*}
By packing lemma \cite{El-GamalKim_11}, the probability of this error event vanishes as $N\ra\infty$ if 
\begin{align*}
R_{1p} &\le I\lp X_1;Y_1, U_1, U_2 | X_{1c},X_{2c},U_1', U_2'\rp.
\end{align*}

Analysis of the other error events follows similarly. For the joint-typicality encoding to be successful with high probability, we need $r_i \ge I\lp U_i; \ol{V}_i\rp$ for $i=1,2$, due to covering lemma \cite{El-GamalKim_11}. 
Hence, decoding is guaranteed to be successful with high probability if the following holds: for $(i,j)=(1,2)$ and $(2,1)$,
\begin{align*}
R_{ip} &\le I\lp X_i;Y_i, U_i, U_j | X_{ic},X_{jc},U_i', U_j'\rp\\
R_{jc} + R_{ip} &\le I\lp X_{jc}, X_i;Y_i, U_i, U_j | X_{ic},U_i', U_j'\rp\\
R_i &\le I\lp X_i;Y_i, U_i, U_j | X_{jc},U_i', U_j'\rp\\
R_{jc} + R_i &\le I\lp X_{jc}, X_i;Y_i, U_i, U_j |U_i', U_j'\rp\\
r_i + r_j + R_{jc} + R_i &\le I\lp U_i', U_j', X_{jc}, X_i;Y_i, U_i, U_j \rp\\
r_i &\ge I\lp U_i; \ol{V}_i\rp.
\end{align*}
for some mapping functions $\{ x_{1e}(u_1,u_2), x_{2e}(u_1,u_2)\}$ and input distribution
\begin{align*}
&p\lp U_1, U_2\rp p\lp U_1', U_2'\rp p\lp X_{1c}|U_1',U_2'\rp p\lp X_1|X_{1c}\rp\\
&\cdot p\lp X_{2c}|U_1',U_2'\rp p\lp X_2|X_{2c}\rp.
\end{align*}
Here $R_i=R_{ip}+R_{ic}$ is the achievable rate for user $i$, $i=1,2$. $\ol{V}_i := \wtild{V}_i - \wtild{V}_{ie}$ as defined previously. $\lp U_1', U_2'\rp$ corresponds to the $(U_1^N(b-1), U_2^N(b-1))$ while $\lp U_1, U_2\rp$ corresponds to the $(U_1^N(b), U_2^N(b))$. Hence, $p\lp U_1', U_2'\rp$ and $p\lp U_1, U_2\rp$ should be the same, since we use the same distribution to generate the quantization codebooks in all the blocks.

\subsection{Rate Region Evaluation}
To achieve the capacity region, we choose the input distribution and the mapping function as follows. For the input distribution, we pick
\begin{align*}
&U_i = \ol{V}_i := \wtild{V}_i - \wtild{V}_{ie},\quad U_i', U_i: \text{i.i.d.}\\
&X_{ic} = X_{ie} + \mathrm{Ber}_{1/2}[\mathrm{supp}V_i]\\
&X_i = X_{ic} + \mathrm{Ber}_{1/2}[\mathrm{supp}X_i\setminus \mathrm{supp}V_i]
\end{align*}
and $x_{ie}(u_1,u_2)$ is a linear map such that the random linear combinations of all levels of $u_1$ and $u_2$ is put uniformly at random on all the levels of $\mathrm{supp}X_i$. The support $\mathrm{supp}X_i$ and $\mathrm{supp}V_i$ denote the levels of $X_i$ and $V_i$ respectively, and $\mathrm{Ber}_{1/2}(\msf{A})$ denotes a random vector with i.i.d. $\mathrm{Ber}(1/2)$ random variables on all the levels of $\msf{A}$.

With the above choice, we have
\begin{align*}
r_1 &\ge I\lp U_1; \ol{V}_1\rp = H\lp \ol{V}_1\rp = p_2 n_{21}\\
r_2 &\ge I\lp U_2; \ol{V}_2\rp = H\lp \ol{V}_2\rp = p_1 n_{12}
\end{align*}
We choose $r_1 = p_2n_{21}$ and $r_2 = p_1n_{12}$, and obtain the following rate region after eliminating $r_1$, $r_2$ and redundant terms (see Appendix~\ref{app_RateEvaluate} for details):
\begin{align*}
R_{1p} &\le \msf{p_1} := (n_{11}-n_{21})^+\\
R_{2c} + R_{1p} &\le  \msf{s_1}:= \max\lbp (n_{11}-n_{21})^+, n_{12}\rbp\\
& \qquad\quad + p_1\min\lbp (n_{11}-n_{21})^+, n_{12}\rbp \\
R_1 &\le \msf{t_1}:= n_{11} + p_2(n_{21}-n_{11})^+ \\
R_{2c} + R_1 &\le \msf{n_1}:= \max(n_{11},n_{12}) \\
R_{2p} &\le \msf{p_2} := (n_{22}-n_{12})^+\\
R_{1c} + R_{2p} &\le  \msf{s_2}:= \max\lbp (n_{22}-n_{12})^+, n_{21}\rbp\\
& \qquad\quad + p_2\min\lbp (n_{22}-n_{12})^+, n_{21}\rbp \\
R_2 &\le \msf{t_2}:= n_{22} + p_1(n_{12}-n_{22})^+ \\
R_{1c} + R_2 &\le \msf{n_2}:= \max(n_{22},n_{21})
\end{align*}

After Fourier Motzkin elimination, $(R_1,R_2)$ satisfying
\begin{align*}
R_1 &\le \min\lbp \msf{t_1}, \msf{n_1}, \msf{p_1}+\msf{s_2}\rbp\\
R_2 &\le \min\lbp \msf{t_2}, \msf{n_2}, \msf{p_2}+\msf{s_1}\rbp\\
R_1+R_2 &\le \min\lbp \msf{p_1}+\msf{n_2}, \msf{p_2}+\msf{n_1}\rbp\\
R_1+R_2 &\le \msf{s_1}+\msf{s_2}\\
2R_1+R_2 &\le \msf{p_1}+\msf{n_1}+\msf{s_2}\\
R_1+2R_2 &\le \msf{p_2}+\msf{n_2}+\msf{s_1}
\end{align*}
is achievable, which coincides with the capacity region \eqref{eq_R1Bd} -- \eqref{eq_R12R2Bd} except the terms $\msf{p_1}+\msf{s_2}$ and $\msf{p_2}+\msf{s_1}$ in the individual rate constraints. We complete the achievability proof by the following fact.
\begin{fact}\label{fact_rate}
$\msf{t_1}\le \msf{p_1}+\msf{s_2},\ \msf{t_2}\le \msf{p_2}+\msf{s_1}$.
\end{fact}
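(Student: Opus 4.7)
The plan is to prove the first inequality $\mathsf{t_1}\le \mathsf{p_1}+\mathsf{s_2}$ via two elementary monotonicity observations and a simple identity, and then obtain $\mathsf{t_2}\le \mathsf{p_2}+\mathsf{s_1}$ by symmetry (swap the roles of indices $1\leftrightarrow 2$ and $p_1\leftrightarrow p_2$), rather than doing a case split on the ordering of the $n_{ij}$'s.

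First I would upper bound $\mathsf{t_1}$ by $\max(n_{11},n_{21})$. Indeed, $\mathsf{t_1}=n_{11}+p_2(n_{21}-n_{11})^+$ is a convex combination of $n_{11}$ and $\max(n_{11},n_{21})$ with weights $1-p_2$ and $p_2$ in the regime $n_{21}\ge n_{11}$, and equals $n_{11}=\max(n_{11},n_{21})$ otherwise; in either case $\mathsf{t_1}\le\max(n_{11},n_{21})$.

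Next I would lower bound $\mathsf{s_2}$ by $n_{21}$. This is immediate from the definition
\[
\mathsf{s_2}=\max\lbp (n_{22}-n_{12})^+, n_{21}\rbp + p_2\min\lbp (n_{22}-n_{12})^+, n_{21}\rbp,
\]
since the first $\max$ term is already $\ge n_{21}$ and the second term is non-negative because $p_2\in[0,1]$ and both arguments of the $\min$ are non-negative.

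Finally I would use the standard identity $(a-b)^++b=\max(a,b)$ with $a=n_{11}$, $b=n_{21}$ to chain the two bounds:
\[
\mathsf{p_1}+\mathsf{s_2}\ \ge\ (n_{11}-n_{21})^+ + n_{21}\ =\ \max(n_{11},n_{21})\ \ge\ \mathsf{t_1}.
\]
The symmetric inequality $\mathsf{t_2}\le \mathsf{p_2}+\mathsf{s_1}$ follows from the same three ingredients with roles of users $1$ and $2$ interchanged. There is no real obstacle here — the only point to be careful about is verifying that $\mathsf{s_2}\ge n_{21}$ without any assumption on the sign of $n_{22}-n_{12}$, which is handled by the $(\cdot)^+$ and the fact that $p_2\ge 0$.
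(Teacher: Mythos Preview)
Your proof is correct. The approach differs from the paper's, though both are short and elementary. The paper observes that $\msf{t_1}$ and $\msf{p_1}+\msf{s_2}$ are each affine functions of $p_2$ (the $\max$ and $\min$ in $\msf{s_2}$ do not involve $p_2$), so it suffices to verify the inequality at the two endpoints $p_2=0$ and $p_2=1$; each endpoint check reduces to a simple comparison of $\max$ expressions. You instead pass through the single intermediate quantity $\max(n_{11},n_{21})$: you bound $\msf{t_1}\le\max(n_{11},n_{21})$ uniformly in $p_2$ (essentially the $p_2=1$ value of $\msf{t_1}$), and separately bound $\msf{p_1}+\msf{s_2}\ge (n_{11}-n_{21})^+ + n_{21}=\max(n_{11},n_{21})$ uniformly in $p_2$ (essentially the $p_2=0$ value of $\msf{p_1}+\msf{s_2}$). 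Your route avoids the two-endpoint case split at the cost of introducing an explicit intermediate bound; the paper's affine-endpoint argument is more mechanical and would generalize to any pair of affine-in-$p_2$ expressions without needing to identify such an intermediate.
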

\begin{proof}
See Appendix~\ref{app_Pf_Fact_rate}.
\end{proof}

\section{Converse Proof}\label{sec_Converse}
The converse proof is a novel modification of those in the perfect feedback case \cite{SuhTse_11} and the rate-limited feedback case \cite{VahidSuh_12}. Due to space constraints, below we outline the main proof and leave the details of the four useful facts to Appendix~\ref{app_Pf_Fact}.

\begin{fact}\label{fact_1}
For $(i,j) = (1,2), (2,1)$,
\begin{align*}
&X_i[t] \eqFunc \lp W_i, \wtild{V}_j^{t-1}, \ul{S}^{t-1}\rp \eqFunc \lp W_i, V_j^{t-1}, \ul{S}^{t-1}\rp.
\end{align*}
\end{fact}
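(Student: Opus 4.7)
The plan is a straightforward induction on $t$, anchored on the algebraic identity
\[
\wtild{Y}_i[s] \;=\; S_i[s]\,\mb{H}_{ii}\,X_i[s] \;+\; \wtild{V}_j[s],
\]
which follows directly from $Y_i = \mb{H}_{ii}X_i + V_j$ together with the definitions $\wtild{Y}_i = S_i Y_i$ and $\wtild{V}_j = S_i V_j$ (the latter is the point to notice: it is $S_i$, not $S_j$, that multiplies $V_j$ inside $\wtild{V}_j$, which matches the state that punctures Tx$i$'s feedback). The intuition is clean: at Tx$i$ the feedback symbol decomposes into a self-interference term that Tx$i$ can regenerate from its own transmitted signal and state, plus the punctured cross-link signal $\wtild{V}_j[s]$. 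So the past observations $(\wtild{Y}_i^{t-1}, \ul{S}^{t-1})$ and $(\wtild{V}_j^{t-1}, \ul{S}^{t-1})$ should determine one another given access to $W_i$.

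For the induction itself, I would take $t=1$ as vacuous ($X_i[1]\eqFunc W_i$) and, for the inductive step, assume $X_i[s]\eqFunc(W_i, \wtild{V}_j^{s-1}, \ul{S}^{s-1})$ for each $s\le t-1$. Then the encoding constraint $X_i[t]\eqFunc(W_i, \wtild{Y}_i^{t-1}, \ul{S}^{t-1})$ combined with the identity above shows that each $\wtild{Y}_i[s]$ for $s\le t-1$ is computable from $\lp X_i[s], S_i[s], \wtild{V}_j[s]\rp$, and by the inductive hypothesis the first factor is in turn a function of $(W_i, \wtild{V}_j^{s-1}, \ul{S}^{s-1})\subseteq(W_i, \wtild{V}_j^{t-1}, \ul{S}^{t-1})$. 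Assembling these pieces gives the first equality $X_i[t]\eqFunc(W_i, \wtild{V}_j^{t-1}, \ul{S}^{t-1})$.

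The second equality then comes for free by transitivity of $\eqFunc$: since $\wtild{V}_j[s]=S_i[s]V_j[s]$ by definition, the sequence $\wtild{V}_j^{t-1}$ is a function of $(V_j^{t-1}, \ul{S}^{t-1})$, so composing with the first equality yields $X_i[t]\eqFunc(W_i, V_j^{t-1}, \ul{S}^{t-1})$.

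I do not expect any genuine obstacle; the entire argument is bookkeeping. The one place a careless write-up could slip is the subscript pairing, namely that the state $S_i$ puncturing Tx$i$'s feedback is the \emph{same} state that appears inside $\wtild{V}_j$ with $j\ne i$. Once that matching is pinned down for both $(i,j)=(1,2)$ and $(2,1)$, the induction runs identically in both cases.
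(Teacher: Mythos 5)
Your proposal is correct and matches the paper's own argument: both rest on the decomposition $\wtild{Y}_i = S_i\mb{H}_{ii}X_i + \wtild{V}_j$ (with the subscript pairing you flagged, $\wtild{V}_j = S_iV_j$), followed by removing the $X_i^{t-1}$ dependence by induction and then absorbing $\wtild V_j^{t-1}$ into $(V_j^{t-1},\ul S^{t-1})$. The paper writes this as a terse chain of $\eqFunc$ relations with the induction left implicit, whereas you spell the induction out; the substance is identical.
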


\begin{fact}\label{fact_2}
For $(i,j) = (1,2), (2,1)$,
\begin{align*}
&H\lp Y_i^N|W_i,\ul{S}^N\rp = H\lp V_j^N,\wtild{V}_i^N|W_i,\ul{S}^N\rp.
\end{align*}
\end{fact}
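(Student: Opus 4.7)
The plan is to exhibit a deterministic bijection between $Y_i^N$ and $V_j^N$ once we condition on $(W_i, \ul{S}^N)$, and then to argue that $\wtild{V}_i^N$ contributes no extra entropy given that pair. Throughout I rely on the forward-channel identity $Y_i = \mb{H}_{ii} X_i + V_j$ together with the causal structure $X_i[t] \eqFunc (W_i, V_j^{t-1}, \ul{S}^{t-1})$ supplied by Fact~\ref{fact_1}.

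First, to show $H(Y_i^N \mid W_i, \ul{S}^N, V_j^N) = 0$: given $(W_i, V_j^N, \ul{S}^N)$, Fact~\ref{fact_1} determines $X_i^N$ entry by entry, and then $Y_i^N = \mb{H}_{ii} X_i^N + V_j^N$ is fixed. The reverse direction $H(V_j^N \mid W_i, \ul{S}^N, Y_i^N) = 0$ follows by induction on $t$: at $t=1$, $X_i[1] \eqFunc W_i$, so $V_j[1] = Y_i[1] - \mb{H}_{ii} X_i[1]$ is recovered; assuming $V_j^{t-1}$ has been recovered, Fact~\ref{fact_1} hands us $X_i[t]$, and then $V_j[t] = Y_i[t] - \mb{H}_{ii} X_i[t]$. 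Chaining the two directions yields $H(Y_i^N \mid W_i, \ul{S}^N) = H(V_j^N \mid W_i, \ul{S}^N)$.

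Second, the plan is to augment the right-hand side with $\wtild{V}_i^N$ at no cost. Since $V_i^N = \mb{H}_{ji} X_i^N$ and $X_i^N$ is a function of $(W_i, V_j^{N-1}, \ul{S}^{N-1})$ by Fact~\ref{fact_1}, the quantity $\wtild{V}_i^N = S_j^N V_i^N$ is a deterministic function of $(W_i, V_j^N, \ul{S}^N)$. Therefore
\begin{align*}
H(V_j^N, \wtild{V}_i^N \mid W_i, \ul{S}^N)
&= H(V_j^N \mid W_i, \ul{S}^N) \\
&\quad + H(\wtild{V}_i^N \mid V_j^N, W_i, \ul{S}^N) \\
&= H(V_j^N \mid W_i, \ul{S}^N),
\end{align*}
which combined with the previous step delivers the claim for $(i,j)=(1,2)$; the case $(i,j)=(2,1)$ is identical by symmetry.

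The argument is essentially bookkeeping. The one subtle point is that the reconstruction of $V_j^N$ from $Y_i^N$ must proceed causally, which is exactly why Fact~\ref{fact_1} is stated with $V_j^{t-1}$ rather than $V_j^t$; keeping the time indices straight is the only place where care is needed, and I do not foresee a genuine obstacle beyond that.
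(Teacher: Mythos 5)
Your proposal is correct, and the underlying idea — exploit the deterministic relationships supplied by Fact~\ref{fact_1} to pass between $Y_i^N$ and $V_j^N$, then absorb $\wtild{V}_i^N$ for free — is the same as the paper's. The presentation differs in a minor but noteworthy way. The paper does a symbol-by-symbol chain-rule expansion, writing $H(Y_1^N \mid W_1, \ul{S}^N) = \sum_t H(Y_1[t] \mid W_1, \ul{S}^N, Y_1^{t-1})$, inserting $X_1^t$ into the conditioning (it is a function of what is already there) to swap $Y_1$ for $V_2$, then removing $X_1^t$ via Fact~\ref{fact_1}, and finally reassembling the sum into $H(V_2^N, \wtild{V}_1^N \mid W_1, \ul{S}^N)$. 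Your version skips the telescoping and instead proves the two one-sided conditional-entropy vanishings directly, $H(Y_i^N \mid W_i, \ul{S}^N, V_j^N) = 0$ and $H(V_j^N \mid W_i, \ul{S}^N, Y_i^N) = 0$, the latter by a causal induction, and then separately notes $H(\wtild{V}_i^N \mid V_j^N, W_i, \ul{S}^N) = 0$. This buys transparency: the paper's final step, where $\wtild{V}_1^N$ appears in the entropy without comment, is exactly the augmentation you spell out explicitly. Both proofs lean on Fact~\ref{fact_1} in the same place and are of comparable length; yours is arguably easier to audit because the bijection is stated head-on rather than implicit in the chain-rule manipulations.
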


\begin{fact}\label{fact_3}
For $(i,j) = (1,2), (2,1)$,
\begin{align*}
&I\lp W_i;V_j^N,\wtild{V}_i^N|\ul{S}^N\rp \le Np_jn_{ji}.
\end{align*}
\end{fact}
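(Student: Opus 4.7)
The plan is to split the mutual information via a two-step chain rule that isolates the $\wtild{V}_i^N$ contribution (which admits a clean per-symbol entropy bound) from the $V_j^N$ contribution (which I intend to show is absorbed once we replace it by $W_j$). The structural observation driving the argument is that, by Fact~\ref{fact_1} applied with the roles of the two users swapped, $X_j[t] \eqFunc (W_j, \wtild{V}_i^{t-1}, \ul{S}^{t-1})$, so that $V_j[t] = \mathbf{H}_{ij} X_j[t]$ is a function of $(W_j, \wtild{V}_i^{t-1}, \ul{S}^{t-1})$ as well. Iterating in $t$ shows that the whole pair $(V_j^N, \wtild{V}_i^N)$ is a deterministic function of $(W_j, \wtild{V}_i^N, \ul{S}^N)$, and data processing then yields
\[
I(W_i; V_j^N, \wtild{V}_i^N \mid \ul{S}^N) \le I(W_i; W_j, \wtild{V}_i^N \mid \ul{S}^N).
\]

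Next I would apply the chain rule on the right-hand side,
\[
I(W_i; W_j, \wtild{V}_i^N \mid \ul{S}^N) = I(W_i; W_j \mid \ul{S}^N) + I(W_i; \wtild{V}_i^N \mid W_j, \ul{S}^N).
\]
The first term vanishes because $W_1$, $W_2$, and $\ul{S}^N$ are mutually independent. The second is upper bounded by $H(\wtild{V}_i^N \mid \ul{S}^N)$, and since $\ul{S}$ is i.i.d. and $\ul{S}[t]$ is independent of $X_i[t]$ (as $X_i[t]$ depends on $\ul{S}$ only through times strictly less than $t$), we obtain $H(\wtild{V}_i^N \mid \ul{S}^N) \le \sum_t H(\wtild{V}_i[t] \mid \ul{S}[t])$. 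Splitting on $S_j[t]$: on $\{S_j[t] = 0\}$ we have $\wtild{V}_i[t] = 0$, while on $\{S_j[t] = 1\}$ we have $\wtild{V}_i[t] = V_i[t] = \mathbf{H}_{ji}X_i[t]$, which is supported on at most $n_{ji}$ binary levels. Hence each summand is bounded by $p_j n_{ji}$, producing the claimed $N p_j n_{ji}$.

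The main subtlety---the step that needs the most care---is why one is entitled to replace $V_j^N$ by $W_j$ in the conditioning. A tempting alternative would be to argue directly that $W_i \perp W_j \mid (\wtild{V}_i^N, \ul{S}^N)$ and conclude $I(W_i; V_j^N \mid \wtild{V}_i^N, \ul{S}^N) = 0$, but this conditional independence is generally false: $\wtild{V}_i^N$ is itself a function of both $W_1$ and $W_2$ through the feedback loop, so conditioning on it couples the messages. The resolution is to perform the data-processing step first---never conditioning on $\wtild{V}_i^N$ when separating the messages---and only then invoke the unconditional independence of $W_1$, $W_2$, and $\ul{S}^N$.
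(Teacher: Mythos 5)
Your proof is correct and follows essentially the same path as the paper's: use Fact~\ref{fact_1} (applied to user $j$) to realize $V_j^N$ as a function of $(W_j,\wtild{V}_i^N,\ul{S}^N)$ so that data processing replaces $V_j^N$ by $W_j$, kill the $I(W_i;W_j\mid\ul{S}^N)$ term by independence, and then bound $H(\wtild{V}_i^N\mid\ul{S}^N)$ symbol-by-symbol, splitting on $S_j[t]$. The only cosmetic difference is that the paper writes the final per-symbol step as an expectation over $S_2^N$ of $N_1(s_2^N)n_{21}$, whereas you bound $H(\wtild{V}_i[t]\mid\ul{S}[t])\le p_j n_{ji}$ directly; these are the same computation.
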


\begin{fact}\label{fact_4}
For $(i,j) = (1,2), (2,1)$,
\begin{align*}
&N^{-1}H\lp Y_i^N| V_i^N, \wtild{V}_j^N, \ul{S}^N\rp\\
&\le p_i (n_{ii}-n_{ji})^+ + (1-p_i)\max\lbp n_{ij}, (n_{ii}-n_{ji})^+\rbp,\\
&N^{-1}H\lp Y_i^N| V_j^N, \wtild{V}_i^N, \ul{S}^N\rp\\
&\le p_j(n_{ii}-n_{ji})^+ + (1-p_j)n_{ii},\\
\end{align*}
\end{fact}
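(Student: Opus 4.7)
My plan is to reduce both inequalities to per-letter statements by chain-rule and the memorylessness of $\ul{S}$, then perform a case analysis on the relevant feedback state bit. Concretely, applying the chain rule and then dropping all conditioning not indexed by time $t$ (which only enlarges entropy) gives, for the first bound,
\begin{align*}
H\lp Y_i^N\mid V_i^N,\wtild{V}_j^N,\ul{S}^N\rp
&=\sum_{t=1}^N H\lp Y_i[t]\mid Y_i^{t-1},V_i^N,\wtild{V}_j^N,\ul{S}^N\rp\\
&\le \sum_{t=1}^N H\lp Y_i[t]\mid V_i[t],\wtild{V}_j[t],\ul{S}[t]\rp,
\end{align*}
and similarly the second entropy reduces to $\sum_t H(Y_i[t]\mid V_j[t],\wtild{V}_i[t],\ul{S}[t])$. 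It then suffices to bound each single-letter term by the claimed convex combination, since $\ul{S}[t]$ is i.i.d.

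For the first inequality the relevant state is $S_i[t]$, because $\wtild{V}_j[t]=S_i[t]V_j[t]$ (e.g., $\wtild{V}_2=S_1V_2$). When $S_i[t]=1$ (probability $p_i$), $V_j[t]$ is exposed, so subtracting it from $Y_i[t]=\mb{H}_{ii}X_i[t]+V_j[t]$ leaves only the uncertainty of $\mb{H}_{ii}X_i[t]$ given $V_i[t]=\mb{H}_{ji}X_i[t]$; the shift-matrix structure $\mb{H}_{ab}=\mb{S}^{q-n_{ab}}$ makes $V_i[t]$ fix the top $\min(n_{ii},n_{ji})$ levels of $\mb{H}_{ii}X_i[t]$, leaving at most $(n_{ii}-n_{ji})^+$ free binary levels. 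When $S_i[t]=0$ (probability $1-p_i$), $V_j[t]$ is erased, and the uncertainty of $Y_i[t]$ given $V_i[t]$ is carried jointly by the bottom $(n_{ii}-n_{ji})^+$ free levels of $\mb{H}_{ii}X_i[t]$ and the $n_{ij}$ levels of $V_j[t]$. Since both quantities are aligned at the noise floor, the union of their supports spans at most $\max\lbp n_{ij},(n_{ii}-n_{ji})^+\rbp$ bit levels, which bounds the binary entropy. Averaging over $S_i[t]\sim\mathrm{Ber}(p_i)$ then yields the first inequality.

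For the second inequality the argument is analogous, with $V_j[t]$ now always in the conditioning so that $Y_i[t]-V_j[t]=\mb{H}_{ii}X_i[t]$ carries all the uncertainty, and the relevant state is $S_j[t]$, because $\wtild{V}_i[t]=S_j[t]V_i[t]$. When $S_j[t]=1$ (probability $p_j$), $V_i[t]$ is revealed and pins the top levels of $X_i[t]$, leaving residual entropy at most $(n_{ii}-n_{ji})^+$; when $S_j[t]=0$, nothing is known beyond the support of $\mb{H}_{ii}X_i[t]$, so the residual entropy is at most $n_{ii}$. Averaging over $S_j[t]\sim\mathrm{Ber}(p_j)$ finishes.

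The step I expect to be most delicate is the alignment argument in the $S_i[t]=0$ branch of the first inequality: one needs the fact that the residual levels of $\mb{H}_{ii}X_i[t]$ (after $V_i[t]$ pins the top $\min(n_{ii},n_{ji})$ of them) and all levels of $V_j[t]$ both occupy the \emph{bottom} of $Y_i[t]$, so that the union of supports has size $\max\lbp n_{ij},(n_{ii}-n_{ji})^+\rbp$ and not the naive sum $n_{ij}+(n_{ii}-n_{ji})^+$. This is a direct consequence of $\mb{H}_{ab}=\mb{S}^{q-n_{ab}}$ shifting all inputs down to the noise floor; once this geometric fact is in hand the rest of the proof is a routine expectation over a Bernoulli random variable.
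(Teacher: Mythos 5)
Your proposal is correct and follows essentially the same route as the paper's proof: both reduce the block entropy to a sum of single-letter terms by the chain rule together with conditioning-reduces-entropy, and then bound each term by a case analysis on the relevant feedback state bit (with the identical per-letter level-counting, including the bottom-alignment observation yielding $\max\lbp n_{ij},(n_{ii}-n_{ji})^+\rbp$ in the erased branch). The only cosmetic difference is that the paper first discards $S_j^N$ from the conditioning before taking the expectation over $S_i^N$, whereas you keep $\ul{S}[t]$ and observe that the per-term bound depends only on $S_i[t]$; both are valid.
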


\subsection{Bounds on $R_1$ and $R_2$}
We focus on the bounds on $R_1$. The first term in the minimization is a cut-set bound and the proof is in \cite{SuhTse_11}.

The second term is obtained as follows:
\begin{align*}
&N\lp R_1 - \epsilon_N\rp \le I\lp W_1; Y_1^N|\ul{S}^N\rp\\
&\le I\lp W_1; Y_1^N| V_2^N, \wtild{V}_1^N, \ul{S}^N\rp + I\lp W_1; V_2^N, \wtild{V}_1^N | \ul{S}^N\rp\\
&\overset{\aaaa}{\le} H\lp Y_1^N| V_2^N, \wtild{V}_1^N, \ul{S}^N\rp + Np_2n_{21}\\
&\le N\lbp p_2(n_{11}-n_{21})^+ + (1-p_2)n_{11} + p_2n_{21} \rbp\\
&= N\lbp n_{11} + p_2(n_{21}-n_{11})^+ \rbp.
\end{align*}
Here $\epsilon_N\ra 0$ as $N\ra\infty$. (a) is due to Fact~\ref{fact_3}.
Hence, \eqref{eq_R1Bd} holds. Similarly, so does \eqref{eq_R2Bd}.

\subsection{Bounds on $R_1+R_2$}
The first bound \eqref{eq_R1R2Bd1} is the bound when feedback is perfect, and the proof can be found in \cite{SuhTse_11}. 
The second bound \eqref{eq_R1R2Bd3} is non-trivial and is proved as follows. If $(R_1,R_2)$ is achievable,
\begin{align*}
&N\lp R_1+R_2 - \epsilon_N\rp
\le I\lp W_1;Y_1^N | \ul{S}^N\rp + I\lp W_2;Y_2^N | \ul{S}^N\rp \\
&\overset{\aaaa}{=} H\lp Y_1^N | \ul{S}^N\rp - H\lp V_1^N, \wtild{V}_2^N | W_2, \ul{S}^N \rp + H\lp Y_2^N | \ul{S}^N\rp\\
&\quad - H\lp V_2^N, \wtild{V}_1^N | W_1, \ul{S}^N \rp\\
&\le H\lp Y_1^N | V_1^N, \wtild{V}_2^N, \ul{S}^N\rp + I\lp W_2; V_1^N, \wtild{V}_2^N | \ul{S}^N \rp\\
&\quad + H\lp Y_2^N | V_2^N, \wtild{V}_1^N, \ul{S}^N\rp + I\lp W_1; V_2^N, \wtild{V}_1^N | \ul{S}^N \rp\\
&\overset{\bbbb}{\le} N\lbp\text{Right-Hand Side of \eqref{eq_R1R2Bd3}}\rbp.
\end{align*}
Here $\epsilon_N\ra 0$ as $N\ra\infty$. 
(a) is due to Fact~\ref{fact_2}. 
(b) is due to Fact~\ref{fact_3} and \ref{fact_4}. Hence \eqref{eq_R1R2Bd3} holds.

\subsection{Bounds on $2R_1+R_2$ and $R_1+2R_2$}
We focus on the bound in \eqref{eq_2R1R2Bd}. If $(R_1,R_2)$ is achievable,
\begin{align*}
&N\lp 2R_1 + R_2- \epsilon_N\rp\\
&\le I\lp W_1; Y_1^N|\ul{S}^N\rp + I\lp W_2; Y_2^N|\ul{S}^N\rp + I\lp W_1; Y_1^N|\ul{S}^N\rp\\
&\overset{\aaaa}{\le} H\lp Y_1^N|\ul{S}^N\rp - H\lp V_2^N, \wtild{V}_1^N|W_1, \ul{S}^N\rp + H\lp Y_2^N | \ul{S}^N\rp\\
&\quad - H\lp V_1^N,\wtild{V}_2^N|W_2,\ul{S}^N\rp + I\lp W_1; Y_1^N|W_2,\ul{S}^N\rp\\
&= H\lp Y_1^N|\ul{S}^N\rp - H\lp V_2^N, \wtild{V}_1^N| \ul{S}^N\rp + H\lp Y_2^N | \ul{S}^N\rp\\
&\quad + I\lp W_1; V_2^N, \wtild{V}_1^N| \ul{S}^N\rp - H\lp V_1^N,\wtild{V}_2^N|W_2,\ul{S}^N\rp\\
&\quad + H\lp Y_1^N|W_2,\ul{S}^N\rp\\
&\le H\lp Y_1^N|\ul{S}^N\rp + H\lp Y_2^N| V_2^N, \wtild{V}_1^N, \ul{S}^N\rp\\
&\quad + I\lp W_1; V_2^N, \wtild{V}_1^N| \ul{S}^N\rp + H\lp Y_1^N|V_1^N,\wtild{V}_2^N,W_2,\ul{S}^N\rp\\
&\overset{\bbbb}{\le} N\lbp\begin{array}{l}
\max(n_{11},n_{12}) + \max\lbp n_{21}, (n_{22}-n_{12})^+\rbp\\ + (n_{11}-n_{21})^+ + p_2\min\lbp n_{21}, (n_{22}-n_{12})^+\rbp
\end{array}\rbp.
\end{align*}
Here $\epsilon_N\ra 0$ as $N\ra\infty$. 
(a) is due to Fact~\ref{fact_2}. (b) is due to Fact~\ref{fact_3} and \ref{fact_4}, and 
\begin{align*}
&H\lp Y_1^N|V_1^N,\wtild{V}_2^N,W_2,\ul{S}^N\rp\\
&= H\lp Y_1^N|V_1^N,\wtild{V}_2^N,W_2,\ul{S}^N,X_2^N\rp\\
&= H\lp \mb{H}_{11}X_1^N|V_1^N,\wtild{V}_2^N,W_2,\ul{S}^N,X_2^N\rp
\le H\lp \mb{H}_{11}X_1^N|V_1^N\rp\\
&\le N(n_{11}-n_{21})^+ 
\end{align*}
due to Fact~\ref{fact_1}.
Hence, \eqref{eq_2R1R2Bd} holds.

\begin{remark}
In the above proof, we do not make use of the assumption that the realization of the feedback states $(S_1^N, S_2^N)$ is known to the transmitters causally. In other words, even if we allow some genie to provide the whole state sequence $(S_1^N, S_2^N)$ to all four terminals beforehand, the capacity region remains the same.
\end{remark}

\section{Acknowledgements}\label{sec_Ack}
The work of C. Karakus and S. Diggavi was supported in part by NSF award 1136174 and MURI award AFOSR FA9550-09-064. The work of I.-H. Wang was supported by EU project CONECT FP7-ICT-2009-257616.

%
\bibliographystyle{ieeetr}
\bibliography{Ref}

\begin{thebibliography}{1}

\bibitem{SuhTse_11}
C.~Suh and D.~N.~C. Tse, ``Feedback capacity of the {G}aussian interference
  channel to within $2$ bits,'' {\em IEEE Transactions on Information Theory},
  vol.~57, pp.~2667--2685, May 2011.

\bibitem{AvestimehrDiggavi_09}
A.~S. Avestimehr, S.~N. Diggavi, and D.~N.~C. Tse, ``Wireless network
  information flow: A deterministic approach,'' {\em IEEE Transactions on
  Information Theory}, vol.~57, pp.~1872--1905, April 2011.

\bibitem{VahidSuh_12}
A.~Vahid, C.~Suh, and A.~S. Avestimehr, ``Interference channels with
  rate-limited feedback,'' {\em IEEE Transactions on Information Theory},
  vol.~58, pp.~2788--2812, May 2012.

\bibitem{LeTandon_12}
S.-Q. Le, R.~Tandon, M.~Motani, and H.~V. Poor, ``The capacity region of the
  symmetric linear deterministic interference channel with partial feedback,''
  {\em Proceedings of Allerton Conference on Communication, Control, and
  Computing}, October 2012.

\bibitem{SuhWang_12}
C.~Suh, I.-H. Wang, and D.~N.~C. Tse, ``Two-way interference channels,'' {\em
  Proceedings of IEEE International Symposium on Information Theory},
  pp.~2811--2815, July 2012.

\bibitem{El-GamalKim_11}
A.~A. $\textrm{El Gamal}$ and Y.-H. Kim, {\em Network Information Theory}.
\newblock Cambridge University Press, 2011.

\end{thebibliography}

\appendices
\section{Achievable Rate Evaluation}\label{app_RateEvaluate}
Let us evaluate the rate constraints corresponding to the decoding at Rx1:
\begin{align*}
R_{1p} &\le I\lp X_1;Y_1, U_1, U_2 \mid X_{1c},X_{2c},U_1', U_2'\rp\\
&= H\lp Y_1, U_1, U_2 \mid X_{1c},X_{2c},U_1', U_2'\rp \\
&= H\lp X_1 \mid X_{1c}, X_{1e}, U_1', U_2'\rp\\
&= (n_{11}-n_{21})^+\\
R_{2c} + R_{1p} &\le I\lp X_{2c}, X_1;Y_1, U_1, U_2 \mid X_{1c},U_1', U_2'\rp\\
&= H\lp Y_1, U_1, U_2 \mid X_{1c},U_1', U_2'\rp\\
&= H\lp Y_1, U_2 \mid X_{1c}, X_{1e}, X_{2e}, U_1', U_2'\rp\\
&= \max\lbp (n_{11}-n_{21})^+, n_{12}\rbp\\
&\quad + p_1\min\lbp (n_{11}-n_{21})^+, n_{12}\rbp\\
R_1 &\le I\lp X_1;Y_1, U_1, U_2 \mid X_{2c},U_1', U_2'\rp\\
&= H\lp Y_1, U_1, U_2 \mid X_{2c},U_1', U_2'\rp\\
&= H\lp X_1, U_1 \mid X_{2c}, X_{1e}, X_{2e}, U_1', U_2'\rp\\
&= n_{11} + p_2(n_{21}-n_{11})^+\\
R_{2c} + R_1 &\le I\lp X_{2c}, X_1;Y_1, U_1, U_2 \mid U_1', U_2'\rp\\
&= H\lp Y_1, U_1, U_2 \mid U_1', U_2'\rp\\
&= H\lp Y_1, U_1, U_2 \mid X_{1e}, X_{2e}, U_1', U_2'\rp\\
&= \lbp\begin{array}{l}
q_{00}\max(n_{11},n_{12})\\ + q_{01}\max(n_{11},n_{12}+n_{21})\\ + q_{10}(n_{11}+n_{12})\\ + q_{11}(\max(n_{11},n_{21})+n_{12})
\end{array}\rbp\\
r_1 + r_2 + R_{2c} + R_1 &\le I\lp U_1', U_2', X_{2c}, X_1;Y_1, U_1, U_2 \rp\\
&= H\lp Y_1, U_1, U_2\rp \\
&= \max(n_{11},n_{12}) + p_2n_{21} + p_1n_{12}
\end{align*}

Plug in $r_1=p_2n_{21}, r_2=p_1n_{12}$ we obtain
\begin{align*}
R_{1p} &\le (n_{11}-n_{21})^+\\
R_{2c} + R_{1p} &\le \max\lbp (n_{11}-n_{21})^+, n_{12}\rbp\\
&\quad + p_1\min\lbp (n_{11}-n_{21})^+, n_{12}\rbp\\
R_1 &\le n_{11} + p_2(n_{21}-n_{11})^+\\
R_{2c} + R_1 &\le 
\lbp\begin{array}{l}
q_{00}\max(n_{11},n_{12})\\ + q_{01}\max(n_{11},n_{12}+n_{21})\\ + q_{10}(n_{11}+n_{12})\\ + q_{11}(\max(n_{11},n_{21})+n_{12})
\end{array}\rbp\\
R_{2c} + R_1 &\le \max(n_{11},n_{12})
\end{align*}

Below we show that the second bound on $R_{2c} + R_1$ always dominates the first term. First note that $q_{00}+q_{01}+q_{10}+q_{11}=1$. Hence, we separate right hand side of the second bound into four parts, and show each of them is not smaller than the corresponding ones in the first bound.
\begin{itemize}
\item $q_{00}$-term: $\max(n_{11},n_{12})$ = $\max(n_{11},n_{12})$.
\item $q_{01}$-term: $\max(n_{11},n_{12}+n_{21}) \ge \max(n_{11},n_{12})$.
\item $q_{10}$-term: $n_{11}+n_{12} \ge \max(n_{11},n_{12})$.
\item $q_{11}$-term: $(\max(n_{11},n_{21})+n_{12}) \ge \max(n_{11},n_{12})$.
\end{itemize}
Hence, we obtain the rate region.

\section{Proof of Fact~\ref{fact_rate}}\label{app_Pf_Fact_rate}
Note that both $\msf{t_1}$ and $\msf{p_1}+\msf{s_2}$ are affine in $p_2$. Plugging in $p_2=0$, we see that
\begin{align*}
&\msf{t_1} = n_{11}\\
&\le \msf{p_1}+\msf{s_2} = \max\lbp (n_{22}-n_{12})^++(n_{11}-n_{21})^+, n_{21}, n_{11}\rbp.
\end{align*}
Plugging in $p_2=1$, we see that
\begin{align*}
&\msf{t_1} = \max(n_{11},n_{21})\\
&\le \msf{p_1}+\msf{s_2} = (n_{11}-n_{21})^+ + n_{21} + (n_{22}-n_{12})^+ .
\end{align*} 
Hence, $\msf{t_1}\le \msf{p_1}+\msf{s_2}$ for all $p_2$. Similarly, $\msf{t_2}\le \msf{p_2}+\msf{s_1}$.

\section{Proof of Fact~\ref{fact_1} -- \ref{fact_4}}\label{app_Pf_Fact}
\subsection{Proof of Fact~\ref{fact_1}}
By definition, we have
\begin{align*}
X_1[t] &\eqFunc \lp W_1, \wtild{Y}_1^{t-1}, \ul{S}^{t-1}\rp \eqFunc \lp W_1, X_1^{t-1}, \wtild{V}_2^{t-1}, \ul{S}^{t-1}\rp\\
&\eqFunc \lp W_1, \wtild{V}_2^{t-1}, \ul{S}^{t-1}\rp \eqFunc \lp W_1, V_2^{t-1}, \ul{S}^{t-1}\rp.
\end{align*}
Similarly, the other functional relationship holds.

\subsection{Proof of Fact~\ref{fact_2}}
Let us focus on the first equality. 
\begin{align*}
&H\lp Y_1^N|W_1,\ul{S}^N\rp 
=\sum_{t=1}^N H\lp Y_1[t] | W_1, \ul{S}^N,Y_1^{t-1}\rp\\
&\overset{\aaaa}{=}\sum_{t=1}^N H\lp Y_1[t] | W_1, \ul{S}^N,Y_1^{t-1}, X_1^t\rp\\ 
&=\sum_{t=1}^N H\lp V_2[t] | W_1, \ul{S}^N,V_2^{t-1}, X_1^t\rp\\ 
&\overset{\bbbb}{=}\sum_{t=1}^N H\lp V_2[t] | W_1, \ul{S}^N,V_2^{t-1}\rp\\ 
&= H\lp V_2^N,\wtild{V}_1^N|W_1,\ul{S}^N\rp,
\end{align*}
where (a) is by definition, (b) is due to Fact~\ref{fact_1}.
The other holds similarly.

\subsection{Proof of Fact~\ref{fact_3}}
Below we shall prove the inequality for $(i,j)=(1,2)$. The second one follows in a similar way.
\begin{align*}
&I\lp W_1;V_2^N,\wtild{V}_1^N|\ul{S}^N\rp\\
&\overset{\aaaa}\le I\lp W_1;W_2,\wtild{V}_1^N|\ul{S}^N\rp
= I\lp W_1;\wtild{V}_1^N|\ul{S}^N,W_2\rp\\
&= H\lp\wtild{V}_1^N|\ul{S}^N,W_2\rp
\le H\lp\wtild{V}_1^N|S_2^N\rp\\
&= \E_{S_2^N}\lb H\lp (s_2V_1)^N \rp \big| S_2^N = s_2^N\rb\\
&\le \E_{S_2^N}\lb \sum_{t=1}^N H\lp s_2[t]V_1[t] \rp \Bigg| S_2^N = s_2^N\rb\\
&\le \E_{S_2^N}\lb N_1\lp s_2^N\rp n_{21} \Big| S_2^N = s_2^N\rb\\
&= Np_2n_{21}.
\end{align*}
Here $N_1(\cdot)$ denotes the number of $1$'s in the sequence. (a) follows because $V_2^N \eqFunc \lp W_2, \wtild{V}_1^{N}, \ul{S}^{N}\rp$.

\subsection{Proof of Fact~\ref{fact_4}}
We focus on $(i,j)=(1,2)$ below. For the first inequality,
\begin{align*}
&H\lp Y_1^N | V_1^N, \wtild{V}_2^N, \ul{S}^N\rp 
\le H\lp Y_1^N | V_1^N, \wtild{V}_2^N, S_1^N\rp\\
&= \E_{S_1^N}\lb H\lp Y_1^N | V_1^N, (s_1V_2)^N \rp \big| S_1^N = s_1^N\rb\\
&\le \E_{S_1^N}\lb \sum_{t=1}^N H\lp Y_1[t] | V_1[t], s_1[t]V_2[t]\rp \Bigg| S_1^N = s_1^N\rb\\
&\le \E_{S_1^N}\lb \left. \begin{array}{l}N_1\lp s_1^N\rp (n_{11}-n_{21})^+\\ + N_0\lp s_1^N\rp \max\lbp n_{12}, (n_{11}-n_{21})^+\rbp \end{array}\right | S_1^N = s_1^N\rb\\
&= Np_1 (n_{11}-n_{21})^+ + N(1-p_1)\max\lbp n_{12}, (n_{11}-n_{21})^+\rbp.
\end{align*}
Here $N_1(\cdot)$ and $N_0(\cdot)$ denote the number of $1$'s and $0$'s respectively in the sequence.

For the second inequality,
\begin{align*}
&H\lp Y_1^N | V_2^N, \wtild{V}_1^N, \ul{S}^N\rp 
\le H\lp Y_1^N | V_2^N, \wtild{V}_1^N, S_2^N\rp\\
&= \E_{S_2^N}\lb H\lp Y_1^N | V_2^N, (s_2V_1)^N \rp \big| S_2^N = s_2^N\rb\\
&\le \E_{S_2^N}\lb \sum_{t=1}^N H\lp Y_1[t] | V_2[t], s_2[t]V_1[t]\rp \Bigg| S_2^N = s_2^N\rb\\
&\le \E_{S_2^N}\lb N_1\lp s_2^N\rp (n_{11}-n_{21})^+ + N_0\lp s_2^N\rp n_{11} \Big| S_2^N = s_2^N\rb\\
&= Np_2 (n_{11}-n_{21})^+ + N(1-p_2)n_{11}.
\end{align*}

The other cases when $(i,j)=(2,1)$ follow similarly.

\end{document}